\definecolor{keywordcolor}{rgb}{0.7, 0.1, 0.1}   
\definecolor{commentcolor}{rgb}{0.4, 0.4, 0.4}   
\definecolor{symbolcolor}{rgb}{0.0, 0.1, 0.6}    
\definecolor{sortcolor}{rgb}{0.1, 0.5, 0.1}      
\title{Formalizing the Gromov-Hausdorff space}
\newcommand{\N}{\mathbb{N}}
\newcommand{\R}{\mathbb{R}}
\newcommand{\boK}{\mathcal{K}}
\newcommand{\GH}{\mathcal{GH}}
\newcommand{\mathlib}{\texttt{mathlib}\xspace}
\newtheorem{thm}{Theorem}[section]
\newtheorem{prop}[thm]{Proposition}
\newtheorem{definition}[thm]{Definition}
\theoremstyle{definition}
\author{
Sébastien Gouëzel
\\
IRMAR, CNRS UMR 6625,
Université de Rennes 1, 35042 Rennes, France
\\ sebastien.gouezel@univ-rennes1.fr}
\institution{}
\begin{document}
\maketitle

\begin{abstract}
The Gromov-Hausdorff space is usually defined in textbooks as ``the space
of all compact metric spaces up to isometry''. We describe a formalization
of this notion in the Lean proof assistant, insisting on how we need to
depart from the usual informal viewpoint of mathematicians on this object
to get a rigorous formalization.
\end{abstract}
\vskip 20pt


The Gromov-Hausdorff space is the space of all nonempty compact metric spaces
up to isometry. It has been introduced by Gromov in~\cite{gromov_hausdorff},
and plays now an important role in branches of geometry and probability
theory. Its intricate nature of a space of equivalence classes of spaces
gives rise to interesting formalization questions, both from the point of
view of the interface with the rest of the library and on design choices for
definitions and proofs. This text is devoted to a discussion of these issues:
it describes a formalization of the main features of the Gromov-Hausdorff
space in the Lean proof assistant, developed at Microsoft Research by
Leonardo de Moura~\cite{demoura_lean}, within the library
\mathlib~\cite{mathlib}.

This text is written with two audiences in mind: it can be read by curious
mathematicians who want to learn the basics of the Gromov-Hausdorff space,
and by formalizers who want to learn about the challenges raised by the
formalization of an unusual mathematical object such as this one. It should
be reasonably self-contained.

In Section~\ref{sec:math}, we give a purely mathematical description of the
Gromov-Hausdorff space and its salient features. In Section~\ref{sec:formal},
we give an overview of our formalization. The last three sections are devoted
to specific interesting points that were raised during this formalization.
More specifically, Section~\ref{sec:logic} discusses the possible choices of
definition for the Gromov-Hausdorff space. Section~\ref{sec:realized}
explains how preexisting gaps in the \mathlib library had to be filled to
show that the Gromov-Hausdorff distance is realized.
Section~\ref{sec:complete} focuses on a particularly subtle inductive
construction involved in the proof of the completeness of the
Gromov-Hausdorff space, and the shortcomings of Lean 3 that had to be
circumvented to formalize it.

\section{A primer on the Gromov-Hausdorff space}
\label{sec:math}

In this paragraph, we give a quick overview on the Gromov-Hausdorff space as
presented in mathematics textbooks. See for
instance~\cite[Section~7.3]{burago_burago_ivanov}
or~\cite[Section~10.1]{petersen_book}.

Given two nonempty bounded subsets $A$ and $B$ of a metric space $X$, there
is a way to tell how close these are, as subsets of $X$, through their
\emph{Hausdorff distance} $d^X_H(A, B)$. It is the infimum of those $r$ such
that $A$ is included in the $r$-neighborhood of $B$ (i.e., the set of points
within distance at most $r$ of a point in $B$) and $B$ is included in the
$r$-neighborhood of $A$. This $r$ is finite as $A$ and $B$ are bounded and
nonempty, and zero if and only if $A$ and $B$ have the same closure. In
particular, $d^X_H$ induces a distance on the space of nonempty bounded
closed subsets of $X$, and also on the space $\boK_X$ of its nonempty compact
subsets. This distance is very well behaved: $(\boK_X, d^X_H)$ is complete
(resp.\ second-countable, resp.\ compact) if $X$ is.

Much more recently, Gromov has introduced in~\cite{gromov_hausdorff} a way to
compare metric spaces even when they are not embedded in a common space. His
motivation was to be able to prove that some classes of Riemannian manifolds
were totally bounded or compact, in a suitable sense, to deduce uniformity
statements over all manifolds in these classes. While there are many variants
of his notion of distance, we will focus in this article on the simplest one,
over nonempty compact metric spaces.

\begin{definition}
Let $X$ and $Y$ be two nonempty compact metric spaces. Their Gromov-Hausdorff
distance $d(X, Y)$ is the infimum of $d^{Z'}_H(X', Y')$ over all metric
spaces $Z'$ and all subsets $X'$ and $Y'$ of $Z'$ which are isometric
respectively to $X$ and $Y$.
\end{definition}

Note that the infimum in this definition makes sense: it is always possible
to embed isometrically $X$ and $Y$ in a common metric space (for instance by
putting a suitable distance on the disjoint union of $X$ and $Y$).

Let $\GH$ denote the ``space'' of nonempty compact metric spaces up to
isometry. There is a set-theoretic difficulty here, to which we will come
back in Section~\ref{sec:logic} but that we will ignore for now.

The basic result in the theory is the following theorem, which we have
formalized in the Lean proof assistant as part of the \mathlib library.

\begin{thm}
\label{thm:GH} The Gromov-Hausdorff distance is indeed a distance on $\GH$.
With this distance, $\GH$ is a complete second countable metric space.
\end{thm}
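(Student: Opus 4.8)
The plan is to establish the theorem in four stages: (1) well-definedness, (2) the metric axioms, (3) second countability, and (4) completeness. For well-definedness, I first note that the infimum in the definition is taken over a nonempty family (as remarked, $X$ and $Y$ always embed in a common space, e.g.\ via a suitable metric on the disjoint union), and is bounded below by $0$, so $d(X,Y)$ is a well-defined nonnegative real; a standard bound (gluing two spaces along a point, or using $r = \operatorname{diam}(X) + \operatorname{diam}(Y)$ after picking basepoints) shows it is finite. Since the Hausdorff distance depends only on the isometry types of $X'$ and $Y'$ inside $Z'$, the value $d(X,Y)$ depends only on the isometry classes of $X$ and $Y$, so it descends to $\GH$. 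It is convenient here to replace the quantification over all ambient spaces $Z'$ by a single fixed universal space: every nonempty compact metric space embeds isometrically into $\ell^\infty(\N)$ (Kuratowski embedding using a countable dense set), so $d(X,Y)$ equals the infimum of $d^{\ell^\infty}_H(X',Y')$ over isometric copies $X', Y' \subseteq \ell^\infty(\N)$. This reformulation is what makes the triangle inequality tractable.

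For the metric axioms: symmetry is immediate. For the triangle inequality $d(X,Z) \le d(X,Y) + d(Y,Z)$, given near-optimal embeddings of $X,Y$ into a space $Z_1$ and of $Y,Z$ into a space $Z_2$, I glue $Z_1$ and $Z_2$ along the common copy of $Y$ — formally, put the quotient (pushout) pseudometric on $Z_1 \sqcup Z_2$ identifying the two copies of $Y$ — check this is a genuine metric on the relevant compact subsets, and estimate the Hausdorff distance between the copies of $X$ and $Z$ by the triangle inequality for $d^X_H$ together with the fact that the copy of $Y$ is close to both. The delicate point is verifying that the glued pseudometric restricts to an actual metric (no collapsing) on $X \cup Z$; this is where a basepoint/separation argument is needed. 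For $d(X,Y) = 0 \Rightarrow X \cong Y$: from a sequence of embeddings with Hausdorff distances tending to $0$, one extracts, using compactness, a correspondence (an $\varepsilon_n$-isometry for $\varepsilon_n \to 0$) between $X$ and $Y$; a diagonal/Arzelà–Ascoli argument then produces a genuine isometry $X \to Y$. Equivalently, work inside $\mathcal{K}_{\ell^\infty(\N)}$: the copies $X_n$ converge in Hausdorff distance, hence (by completeness of $(\mathcal{K}_{\ell^\infty(\N)}, d^{\ell^\infty}_H)$) to a common limit set, forcing $X \cong Y$.

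Second countability follows by exhibiting a countable dense subset of $\GH$: finite metric spaces with rational distances are dense (approximate a compact $X$ by a finite $\varepsilon$-net, then perturb the finitely many distances to rationals), and there are only countably many of these up to isometry. The main obstacle, and the technical heart of the theorem, is completeness. The strategy is: given a Cauchy sequence $(X_n)$ in $\GH$, pass to a rapidly converging subsequence with $d(X_n, X_{n+1}) < 2^{-n}$, realize each successive pair in a common space, and then amalgamate \emph{all} of them into one ambient space $Z$ — an inductive colimit of the pairwise gluings — in which each $X_n$ sits as a compact subset with $d^Z_H(X_n, X_{n+1})$ small. Then the sets $X_n$ form a Cauchy sequence in $\mathcal{K}_Z$; one shows $\bigcup_n X_n$ is totally bounded (hence its closure $Y$ is compact), and $X_n \to Y$ in $d^Z_H$, so $Y$ is the desired limit in $\GH$. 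The subtlety flagged in Section~\ref{sec:complete} is precisely making this inductive gluing construction rigorous — carrying along the compatible system of isometric embeddings and the uniform metric estimates through the induction — and this is the step I would expect to consume most of the formalization effort.
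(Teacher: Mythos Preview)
Your completeness argument---inductively gluing spaces along successive copies of $X_n$, passing to an inductive limit, and invoking completeness of the Hausdorff metric on compact subsets---matches the paper's Section~\ref{sec:complete} essentially verbatim, and your treatments of second countability and of the triangle inequality are standard and correct (the worry about collapsing in the glued metric is in fact harmless: gluing two metric spaces along isometric subspaces always yields a genuine metric, since any chain through the other piece is no shorter than the direct distance). The substantive difference from the paper is the route to $d(X,Y)=0 \Rightarrow X\cong Y$. The paper does not go through $\varepsilon$-isometries; instead it proves the stronger fact that the Gromov--Hausdorff infimum is always \emph{realized} (Section~\ref{sec:realized}): one pulls the ambient metrics back to a sequence of pseudometrics $d_n$ on $X\sqcup Y$, applies Arzela--Ascoli to this equicontinuous family on the compact space $(X\sqcup Y)^2$, and quotients by the zero-set of the limit $d_\infty$. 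From realization, $d=0$ gives coincident images and $\Psi^{-1}\circ\Phi$ is an isometry in one line. This realization theorem is also what the paper feeds into the completeness proof (the coupling spaces $Y_n$ there are optimal, not merely near-optimal), and it is listed as a headline result of the formalization in its own right; your $\varepsilon$-isometry route is valid but does not deliver it.

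One actual gap: your alternative ``work inside $\mathcal{K}_{\ell^\infty(\N)}$: the copies $X_n$ converge in Hausdorff distance, hence \dots\ to a common limit set'' does not work as stated. Knowing only that $d_H(X'_n,Y'_n)\to 0$ for some sequence of isometric copies gives no control whatsoever on $d_H(X'_n,X'_m)$; the copies of $X$ can sit anywhere in $\ell^\infty$ and need not form a Cauchy sequence at all, so completeness of $\mathcal{K}_{\ell^\infty}$ has nothing to act on. Either drop this alternative or replace it by the realization argument above.
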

Let us highlight two important points in this theorem that will be relevant
later on.
\begin{itemize}
\item If two spaces are at distance zero, the theorem asserts that they are
    isometric. This is not obvious as the Gromov-Hausdorff distance is
    defined as an infimum. This result follows from the more general fact
    that the Gromov-Hausdorff distance between two spaces $X$ and $Y$ is
    always realized, i.e., the aforementioned infimum is in fact a minimum.
    To prove this, one should \emph{construct} a metric space $Z'$ and two
    isometric embeddings $f : X \to Z'$ and $g : Y \to Z'$ with
    $d^{Z'}_H(f(X), g(Y)) = d(X, Y)$.
\item Given a Cauchy sequence $X_n$ of compact metric spaces (for instance
    a sequence such that $d(X_n, X_{n+1}) \leq 2^{-n}$), the theorem
    asserts that there exists a compact metric space $X_\infty$ such that
    $X_n \to X_\infty$. Again, this statement involves the
    \emph{construction} of the limiting space $X_\infty$.
\end{itemize}

The standard setting for discussing convergence of random objects in
probability theory is that of complete second countable metric spaces
(see~\cite{billinsgley:convergence}). Thanks to Theorem~\ref{thm:GH}, this
means that a theory of convergence of random compact metric spaces can be set
up, and indeed it has become ubiquitous in modern probability theory. Let us
just mention Aldous' continuous random tree~\cite{aldous_CRT}, which
informally speaking is a random compact metric space which is almost-surely a
(real) tree, but which formally is given by a probability measure on the
space $\GH$. It roughly plays for random metric spaces the same universal
role as Brownian motion does for random walks. The above framework makes it
possible to say rigorously that a family of random metric spaces converges in
distribution to the continuous random tree. This notion shows up in the
author's mathematical research, and is his original motivation to formalize
the Gromov-Hausdorff space in a proof assistant, as a step in his
(unrealistic) program to formalize his own research results.

\section{Formalization overview}

\label{sec:formal}

Formalizing the Gromov-Hausdorff space is an interesting task because of the
unusual feature that it is a ``space'' of equivalence classes of spaces (with
quotation marks around the first space because of set-theoretic issues). A
usable formalization in a mathematics library should retain the following
properties:
\begin{enumerate}
\item It should interact well with preexisting topological concepts. In
    other words, if there is a standard notion of compact topological space
    $X$ in the library, then one should be able to talk of the
    Gromov-Hausdorff distance between such spaces $X$ and $Y$, and not
    between new gadgets that would have been defined specifically in view
    of this formalization.
\item The resulting Gromov-Hausdorff space should also be a topological
    space in the standard sense of the library.
\item One should be able to define a function mapping a nonempty compact
    metric space (in the usual sense) to an element of the Gromov-Hausdorff
    space.
\end{enumerate}

These constraints seem hard to satisfy in simple type theory, where it is not
possible to define a function whose arguments are types (the compact space
$X$) and whose images are elements of another type (the Gromov-Hausdorff
space). On the other hand, they should not be a problem for a framework based
on dependent type theory, with an expressive enough mathematical library. Our
formalization is done using the Lean theorem prover, based on a version of
the calculus of inductive construction, in the framework of the \mathlib
library. It satisfies the above requirements. The main results are available
in the \mathlib
file~\href{https://github.com/leanprover-community/mathlib/blob/master/src/topology/metric_space/gromov_hausdorff.lean}{\texttt{topology/metric\_space/gromov\_hausdorff.lean}}.

Let us give the form of the interface, i.e., the main definitions and
statements, leaving implementation or proof details in $...$ blocks, before
getting to more details.

\begin{lstlisting}
definition GH_space : Type := ...

instance : metric_space GH_space := ...

instance : second_countable_topology GH_space := ...

instance : complete_space GH_space := ...

/-- Mapping a nonempty compact metric space to its equivalence class in `GH_space`. -/
definition to_GH_space (X : Type u) [metric_space X] [compact_space X] [nonempty X] : GH_space := ...

/-- Two nonempty compact spaces have the same image in `GH_space` if and only if they are isometric. -/
theorem to_GH_space_eq_to_GH_space_iff_isometric {X : Type u} [metric_space X] [compact_space X]
  [nonempty X] {Y : Type v} [metric_space Y] [compact_space Y] [nonempty Y] :
  to_GH_space X = to_GH_space Y ↔ nonempty (X ≃ᵢ Y) := ...

/-- The Gromov-Hausdorff distance between two spaces `X` and `Y` can be realized by
isometric embeddings into ` ℓ_infty_ℝ`. -/
theorem GH_dist_eq_Hausdorff_dist (X : Type u) [metric_space X] [compact_space X] [nonempty X]
  (Y : Type v) [metric_space Y] [compact_space Y] [nonempty Y] :
  ∃ Φ : X → ℓ_infty_ℝ, ∃ Ψ : Y → ℓ_infty_ℝ, isometry Φ ∧ isometry Ψ ∧
  GH_dist X Y = Hausdorff_dist (range Φ) (range Ψ) := ...

\end{lstlisting}

In this snippet, \verb+GH_space+ is a type formalizing the Gromov-Hausdorff
space, i.e., the space of nonempty compact metric spaces up to isometry. It
is endowed with a distance (the Gromov-Hausdorff distance) which turns it
into a metric space, in the metric space instance. This metric space turns
out to be second-countable and complete. The interface with concrete nonempty
compact metric spaces is made through the function \verb+to_GH_space+,
associating to any nonempty compact metric space $X$ its equivalence class in
the Gromov-Hausdorff space. This definition is used in the form
\verb+to_GH_space X+: the assumptions of the form \verb+[...]+ in this
definition are \emph{typeclass assumptions} on $X$, registering that it is a
nonempty compact metric space, and filled in automatically by the system when
seeing an expression of the form \verb+to_GH_space X+. The system raises an
error if it can not deduce an instance for these from the context.

The relationship between concrete nonempty compact metric spaces and abstract
points in the Gromov-Hausdorff space is illustrated with two theorems:
\begin{itemize}
\item \verb+to_GH_space_eq_to_GH_space_iff_isometric+ asserts that two
    spaces have the same image in the Gromov-Hausdorff space if and only if
    they are isometric;
\item \verb+GH_dist_eq_Hausdorff_dist+ says that the Gromov-Hausdorff
    distance between two nonempty compact metric spaces is realized, i.e.,
    one can embed them isometrically in a common metric space so that the
    Hausdorff distances between their images is exactly their
    Gromov-Hausdorff distance. The theorem is a little bit stronger,
    because it says that one can use as a common embedding space the metric
    space $\ell^\infty(\R)$ of bounded real sequences, whatever the compact
    metric spaces $X$ and $Y$. (See Section~\ref{sec:logic} for more on
    this).
\end{itemize}
Note that the former theorem is an easy consequence of the latter: if $X$ and
$Y$ have zero Gromov-Hausdorff distance, then their images under $\Phi$ and
$\Psi$ given by the second theorem are at zero Hausdorff distance, hence they
coincide, and it follows that $\Psi^{-1} \circ \Phi$ is an isometry between
$X$ and $Y$.

In the next three sections, we will give more details on three salient points
of the formalization.

\section{Formal definition of the Gromov-Hausdorff space}
\label{sec:logic}

Until now, we have described the Gromov-Hausdorff space as the ``space'' of
equivalence classes of compact metric spaces up to isometry. There is a
problem here: we are quantifying over objects which are not constrained to
belong to a given set. This kind of construction is not allowed in set
theory, as it leads to Russell-like paradoxes: nonempty compact metric spaces
form a class, not a set.

The type theory implemented by Lean makes it possible to circumvent this
issue, thanks to the notion of universe level (already dating back to
Russell). Informally speaking, any class at universe level $u$ becomes an
object one can manipulate at level $u+1$, where $u$ range over $\N$. Thus,
one can define the type of all nonempty compact metric spaces in universe
level $u$, as a well-defined type in universe level $u+1$. Denote it with
$K_u$. One can then define an equivalence relation $\sim$ on $K_u$, saying
that two spaces are isometric, and construct a Gromov-Hausdorff space $\GH_u$
as $K_u/\sim$.

This definition has two drawbacks. First, it depends on the universe level
$u$: one does not get one single Gromov-Hausdorff space, but infinitely many
of them. This makes it more complicated to discuss the Gromov-Hausdorff
distance of two nonempty compact spaces if they come from different
universes. The second issue is that using universe levels for a construction
is often not satisfactory to mathematicians: it means getting out of the
standard ZFC framework by adding inaccessible cardinals axioms.

\medskip

It turns out that the set theoretic issue that there is no set of all compact
metric spaces is not a real issue. Indeed, the cardinality of a compact
metric space is at most the cardinality of the continuum, which means that
the number of non-isometric compact metric spaces is also controlled. For the
formalization, this means that we can define one single Gromov-Hausdorff
space, in the first universe \texttt{Type $0$} (also called simply
\texttt{Type}), the universe in which most natural objects such as $\N$ or
$\R$ live. However, we can not just dismiss the issue as irrelevant, as is
done in most textbooks: we have to make a sensible design choice.

We use the following classical proposition.
\begin{prop}
\label{prop:Kuratoswki} Consider a compact metric space $X$. There exists an
isometric embedding of $X$ into the space $\ell^\infty(\R)$ of bounded real
sequences with its distance coming from the sup norm.
\end{prop}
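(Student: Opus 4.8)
The plan is to use the classical Kuratowski embedding. First I would record that a compact metric space is separable, so (assuming $X$ nonempty, the empty case being vacuous) I can fix a sequence $(x_n)_{n \in \N}$ that is dense in $X$; turning a countable dense subset into such an enumeration is where the nonemptiness is used, to provide a base point. Since $X$ is compact it is bounded, say of diameter $D$, so the map $\Phi$ sending $x \in X$ to the sequence $(d(x, x_n))_{n \in \N}$ already takes values in the closed ball of radius $D$ of $\ell^\infty(\R)$, hence is a well-defined map $\Phi : X \to \ell^\infty(\R)$. (Were one to aim for the merely separable case, one would instead use $\Phi(x) = (d(x,x_n) - d(x_0, x_n))_{n}$, whose boundedness is again a one-line triangle-inequality estimate; for the compact case this refinement is not needed.)

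Next I would verify that $\Phi$ is an isometric embedding, i.e.\ that $\lVert \Phi(x) - \Phi(y)\rVert_\infty = d(x,y)$ for all $x,y \in X$. The bound $\leq$ is immediate: for every $n$ the triangle inequality gives $\lvert d(x,x_n) - d(y,x_n)\rvert \leq d(x,y)$, so the supremum over $n$ is at most $d(x,y)$. For the reverse bound, I would use density: given $\varepsilon > 0$, choose $n$ with $d(x,x_n) < \varepsilon$; then $\lvert d(x,x_n) - d(y,x_n)\rvert \geq d(y,x_n) - d(x,x_n) \geq d(x,y) - 2\varepsilon$, so the supremum is at least $d(x,y) - 2\varepsilon$, and letting $\varepsilon \to 0$ yields $\geq d(x,y)$. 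Combining the two inequalities gives equality, which says precisely that $\Phi$ is an isometry onto its image; since an isometry between metric spaces is automatically injective and a topological embedding, the proposition follows.

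The mathematical content is entirely contained in these two triangle-inequality estimates, the key point being that density is exactly what upgrades the trivial bound $\lVert \Phi(x)-\Phi(y)\rVert_\infty \leq d(x,y)$ to an equality; I expect no real obstacle there. The genuinely fiddly part, and the only place where some care is needed, is the interface with \mathlib: one must unfold the library's definition of $\ell^\infty(\R)$ as a subtype of bounded sequences, check that the function built from $\Phi$ indeed satisfies the boundedness predicate, and confirm that the distance in that space is computed as the supremum appearing above, so that the isometry statement transfers cleanly.
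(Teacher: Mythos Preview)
Your proposal is correct and follows exactly the approach of the paper: choose a dense sequence $(x_n)$ and send $x$ to the sequence $(d(x,x_n))_n$, verifying the isometry via the two triangle-inequality estimates. The paper merely states that ``it is easy to check'' this defines an isometric embedding, so your write-up is simply a fleshed-out version of the same argument.
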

\begin{proof}
Let $x_n$ be a dense sequence in $X$. To a point $x\in X$, associate the
sequence $n \mapsto d(x, x_n)$. It is easy to check that this defines an
isometric embedding of $X$ into $\ell^\infty(\R)$.
\end{proof}

This embedding is called the Kuratowski embedding. From this proposition, it
follows that all compact metric spaces have isometric representatives as
subsets of $\ell^\infty(\R)$. Therefore, we may \emph{define} the
Gromov-Hausdorff space as the space of all nonempty compact subsets of
$\ell^\infty(\R)$, up to isometry (taking advantage of the built-in quotient
construction of Lean). This is an element of \texttt{Type~$0$} as announced,
as all objects in this definition live in \texttt{Type~$0$}.

The map \verb+to_GH_space+, assigning to an arbitrary nonempty compact metric
space the corresponding point in \verb+GH_space+, is then obtained by taking
an isometric image of $X$ in $\ell^\infty(\R)$ thanks to
Proposition~\ref{prop:Kuratoswki}, and then descending to the quotient
\verb+GH_space+.

\section{The Gromov-Hausdorff distance is realized}
\label{sec:realized}

Consider two nonempty compact metric spaces $X$ and $Y$. A key point to show
that the Gromov-Hausdorff distance is a distance is to show that there exist
a metric space $Z'$ and two isometric copies $X'$ of $X$ and $Y'$ of $Y$
inside $Z'$ such that the Hausdorff distance $d^{Z'}_H(X', Y')$ is equal to
the Gromov-Hausdorff distance $d(X, Y)$ of $X$ and $Y$. One has always $d(X,
Y) \leq d^{Z'}_H(X', Y')$, and the goal is to construct suitable $Z'$ and
$X', Y'$ such that this inequality becomes an equality.

One can always find a sequence of spaces $Z'_n$ and isometric embeddings
$\Phi_n : X \to Z'_n$ and $\Psi_n : Y \to Z'_n$ such that
$d_H^{Z'_n}(\Phi_n(X), \Psi_n(Y))$ converges to $d(X, Y)$, by definition of
an infimum. The difficulty is that the spaces $Z'_n$ are unrelated to each
other, so making things converge by extracting subsequences has no obvious
meaning.

The key idea is to forget completely $Z'_n$, and only remember its distance.
Define a map $\Theta_n$ from the disjoint union $X \sqcup Y$ to $Z'_n$, equal
to $\Phi_n$ on $X$ and to $\Psi_n$ on $Y$. Define a function $d_n$ on $(X
\sqcup Y)^2$ by $d_n(a, b) = d(\Theta_n(a), \Theta_n(b))$, where the distance
on the right hand side is the distance in $Z'_n$. This is almost a distance
on $X \sqcup Y$, coinciding with the original distances on $X$ and on $Y$,
except that it does not satisfy in general $d_n(a,b)= 0 \Rightarrow a=b$
since different points in $X$ and $Y$ may be mapped to the same point in
$Z'_n$.

We claim that $d_n$ has a subsequence which converges uniformly to a function
$d_\infty$ (which is also almost a distance in the previous sense). Define a
space $Z$ to be the quotient of $X\sqcup Y$ identifying two points $a$ and
$b$ when $d_\infty(a, b)=0$. This is a metric space, in which $X$ and $Y$
embed isometrically and realizing the Gromov-Hausdorff distance by
construction.

It remains to check the claim. This is a consequence of the classical
Arzela-Ascoli theorem:
\begin{thm}
Let $f_n : X \to Z$ be a sequence of bounded continuous functions on a
compact space $X$, with range included in a compact subset of the metric
space $Z$. Assume that the functions $f_n$ are equicontinuous: for every $x
\in X$ and every $\epsilon>0$, there exists a neighborhood $U$ of $x$ such
that for every $n\in \N$ and every $y\in U$, one has $d(f_n(y), f_n(x))\leq
\epsilon$. Then $f_n$ admits a uniformly converging subsequence.
\end{thm}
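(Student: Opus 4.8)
The plan is to carry out the classical diagonal-extraction proof. First I would record two consequences of the hypotheses: the compact metric space $X$ is separable, so I can fix a countable dense subset $\{x_k\}_{k\in\N}$; and the compact subset $K\subseteq Z$ containing all the ranges $f_n(X)$ is sequentially compact and complete.

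The second step is the diagonal extraction. Since $(f_n(x_1))_n$ lies in the sequentially compact set $K$, it has a convergent subsequence; refining one coordinate at a time, I obtain for each $k$ a subsequence of the previous one along which $f_n(x_k)$ converges, and the diagonal subsequence $g_j := f_{n_j}$ then satisfies that $g_j(x_k)$ converges in $K$ for every $k$.

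Third, I would promote this pointwise convergence on the dense set to uniform Cauchyness on all of $X$, which is where equicontinuity is used. Fix $\epsilon>0$. For each $x\in X$ equicontinuity gives an open neighborhood $V_x$ with $d(f_n(y),f_n(x))\le\epsilon$ for all $n$ and all $y\in V_x$; compactness extracts a finite subcover $V_{x_1},\dots,V_{x_m}$, and in each nonempty open $V_{x_i}$ I pick a point $z_i$ from the dense sequence. As $g_j(z_i)$ converges for each of the finitely many $i$, there is $J$ with $d(g_j(z_i),g_{j'}(z_i))\le\epsilon$ whenever $j,j'\ge J$, uniformly in $i$. For arbitrary $y\in X$, pick $i$ with $y\in V_{x_i}$; chaining the triangle inequality through $g_j(x_i)$, $g_j(z_i)$, $g_{j'}(z_i)$, $g_{j'}(x_i)$ and using equicontinuity four times plus the Cauchy bound once yields $d(g_j(y),g_{j'}(y))\le 5\epsilon$ for $j,j'\ge J$. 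Hence $(g_j)$ is uniformly Cauchy, and since its values lie in the complete set $K$, it converges uniformly, which is the required subsequence.

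The part I expect to be most delicate, especially in a formalized setting, is the diagonal extraction: building and manipulating a nested sequence of subsequences and then passing to its diagonal is notoriously awkward to do cleanly. A tidier alternative would be to prove directly that $\{f_n : n\in\N\}$ is totally bounded in the complete metric space $C(X,K)$ with the uniform distance — cover $X$ by the neighborhoods $V_{x_i}$ and $K$ by finitely many $\epsilon$-balls, observe that each $f_n$ is pinned down up to $3\epsilon$ in sup distance by the tuple of indices of the balls containing $f_n(x_1),\dots,f_n(x_m)$, so finitely many $3\epsilon$-balls cover $\{f_n\}$ — and then invoke that a totally bounded subset of a complete metric space is relatively compact, hence sequentially compact. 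I would choose between the two according to which primitives are already available in the library.
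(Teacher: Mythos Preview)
The paper does not actually prove this theorem: it quotes it as the classical Arzel\`a--Ascoli theorem and remarks that it (along with the notion of uniform convergence itself) had to be added to \texttt{mathlib} as a prerequisite, but no argument is given in the text. So there is no proof in the paper to compare against.

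On its own merits your proposal is the standard textbook argument and is correct in outline, with one caveat worth flagging. The theorem as stated only assumes that $X$ is a compact \emph{space}, not a compact \emph{metric} space; your opening move ``the compact metric space $X$ is separable, so I can fix a countable dense subset'' is therefore not justified in the stated generality, and the diagonal-extraction route needs that countable dense set. Your alternative total-boundedness argument, by contrast, uses only the finite subcover $V_{x_1},\dots,V_{x_m}$ and finitely many $\epsilon$-balls in $K$, so it goes through for an arbitrary compact $X$ and is the cleaner choice for the theorem as written. In the paper's intended application the domain is $(X\sqcup Y)^2$, a compact metric space, so either route would suffice there; and your instinct that the total-boundedness version is friendlier to formalize (it avoids the nested-subsequence bookkeeping) matches the general \texttt{mathlib} philosophy the paper describes.
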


Indeed, one checks readily that the family of functions $d_n$ on the compact
space $(X \sqcup Y)^2$ is equicontinuous (it is even uniformly
Lipschitz-continuous). Unfortunately,the  Arzela-Ascoli theorem was not
available in \mathlib at the time of the formalization of the
Gromov-Hausdorff space (and neither was the notion of uniform convergence!).
An important part of this formalization was therefore devoted to all these
prerequisites, including the definition and study of the Banach space of
bounded continuous functions on a topological space.

It is worth pointing out that, since then, these results have been put to
good use in completely different directions in \mathlib (for instance to
formalize the Stone-Weierstrass theorem, asserting that an algebra of
continuous functions separating points on a compact space is dense in the
space of continuous functions). This is an important point of the
formalization of fairly specialized concepts such as the Gromov-Hausdorff
space: it is a way to notice general-purpose gaps in the library and to fill
them. The \mathlib philosophy is that these gaps should not be filled just in
the minimal way needed to prove the target theorem, but in the maximal
possible generality to make it suitable for further uses in different
directions. For instance, during the formalization of the Gromov-Hausdorff
distance, the notion of uniform convergence has been defined in the maximal
generality of uniform spaces, even if we only needed the case of metric
spaces for this specific application.

\section{Completeness of the Gromov-Hausdorff space}
\label{sec:complete}

To prove the completeness of the Gromov-Hausdorff space, we will need to glue
metric spaces along isometric subspaces, as follows. Assume that $Y$ and $Z$
are two metric spaces, and that another metric space $X$ admits two isometric
embeddings $\Phi : X \to Y$ and $\Psi : X \to Z$. Then one can form a new
space by identifying the two $X$ subsets in $Y$ and $Z$, and this new space
is naturally a metric space, containing isometric copies of both $Y$ and $Z$.

Let us now explain why the Gromov-Hausdorff space is complete. It is enough
to show that a sequence of compact spaces satisfying $d(X_n, X_{n+1}) \le
2^{-n}$ converges. The difficulty is that we need to construct in some way
the limiting metric space. The idea is to embed simultaneously all the $X_n$
in a common metric space $Z_\infty$, with controlled mutual Hausdorff
distances, and use the fact the space of subsets of $Z_\infty$, with the
Hausdorff distance, is complete, to get the desired limit at a subset of
$Z_\infty$.

There exists for each $n$ a metric space $Y_n$ containing isometric copies of
$X_{n-1}$ and of $X_n$ which are at Hausdorff distance $\le 2^{-n}$, by
Section~\ref{sec:realized}. Let us now define inductively a sequence of
metric spaces $Z_n$ containing isometric copies of $X_0,\dotsc, X_n$, as
follows.
\begin{enumerate}
\item Start with $Z_0 = X_0$.
\item Assume $Z_n$ is defined. It contains an isometric copy of $X_n$. So
    does $Y_{n+1}$. Therefore, we may glue $Z_n$ and $Y_{n+1}$ along their
    respective copies of $X_n$, to obtain the new space $Z_{n+1}$. It
    contains a copy of $X_{n+1}$ (the one contained in $Y_{n+1}$) and
    copies of $X_0,\dotsc, X_n$ (the ones contained in $Z_n$).
\end{enumerate}
Define a suitable limit $Z_\infty$ of the increasing family $Z_n$ (formally,
an inductive limit). It is a metric space, containing for each $k$ a copy
$X'_k$ of $X_k$. By construction, the Hausdorff distance
$d^{Z_\infty}_H(X'_k, X'_{k+1})$ is $\le 2^{-k}$. Since, on a given complete
metric space, the space of its compact subsets is a complete space for the
Hausdorff distance, it follows that $X'_k$ converges, to a compact subset
$X'_\infty$ of $Z_\infty$. Then, in the Gromov-Hausdorff space, $X_n$
converges to the class of $X'_\infty$.

There is an interesting feature in the formalization of this proof, in the
inductive definition of the space $Z_n$, highlighting several shortcomings of
Lean 3. One should define simultaneously the space $Z_n$, but also a metric
space structure on it, and an isometric embedding of $X_n$ in $Z_n$ (which
only makes sense given the metric space structure). And the next step of the
construction will take advantage of all these data to proceed. The most
natural formalization would be by several mutually inductive definitions, but
Lean 3 has weaknesses in this area. Instead, we used one single structure
containing all these data, and one big induction to define the structure at
step $n+1$ from the structure at step $n$. Another issue is that the Lean 3
equation compiler generates a definition in terms of bounded recursion which
is not easy to use. We use instead a direct definition in terms of the
recursor for natural numbers. Here is the full inductive definition we use.

\begin{lstlisting}
variables (X : ℕ → Type) [∀ n, metric_space (X n)] [∀ n, compact_space (X n)] [∀ n, nonempty (X n)]

/-- Auxiliary structure used to glue metric spaces below, recording an isometric embedding
of a type `A` in another metric space. -/
structure aux_gluing_struct (A : Type) [metric_space A] : Type 1 :=
(space  : Type)
(metric : metric_space space)
(embed  : A → space)
(isom   : isometry embed)

/-- Auxiliary sequence of metric spaces, containing copies of `X 0`, ..., `X n`, where each
`X i` is glued to `X (i+1)` in an optimal way. The space at step `n+1` is obtained from the space
at step `n` by adding `X (n+1)`, glued in an optimal way to the `X n` already sitting there. -/
def aux_gluing (n : ℕ) : aux_gluing_struct (X n) := nat.rec_on n
  { space  := X 0,
    metric := by apply_instance,
    embed  := id,
    isom   := λ x y, rfl }
(λ n Z, by letI : metric_space Z.space := Z.metric; exact
  { space  := glue_space Z.isom (isometry_optimal_GH_injl (X n) (X (n+1))),
    metric := by apply_instance,
    embed  := (to_glue_r Z.isom (isometry_optimal_GH_injl (X n) (X (n+1))))
                        ∘ (optimal_GH_injr (X n) (X (n+1))),
    isom   := (to_glue_r_isometry _ _).comp (isometry_optimal_GH_injr (X n) (X (n+1))) })
\end{lstlisting}

We start from a context in which a sequence of nonempty compact metric spaces
$X_n$ is given, which we want to glue together. The structure
\verb+aux_gluing_struct A+ records a metric space containing an isometric
copy of a metric space $A$. The definition \verb+aux_gluing n+ constructs
inductively over $n$ a metric space containing an isometric copy of $X_n$
(and also of all the previous ones, by design, but we only register the last
one for the inductive construction). For $n=0$, it is just $X_0$. At the
$(n+1)$-th step, it glues two spaces containing an isometric copy of $X_n$
along $X_n$ as explained in Section~\ref{sec:realized}: on the one hand the
space constructed at the previous step; on the other hand a space in which
the Gromov-Hausdorff distance between $X_n$ and $X_{n+1}$ is realized.

The reader may note the line \verb+letI : metric_space Z.space := Z.metric+
at the beginning of the inductive step in the construction. By induction, the
space \verb+Z.space+ constructed at step $n$ has a metric space structure,
called \verb+Z.metric+. However, this metric space structure is not yet
available to typeclass inference: there is a caching mechanism underneath
(which is very important performancewise as typeclass inference is quite
costly), so new instances need to be declared explicitly just like here. Once
this preliminary incantation has been done, the system knows about the metric
space structure on \verb+Z.space+ and is happy with the statement that a map
to \verb+Z.space+ is an isometry, for instance. This is needed for the next
step of the construction to go through.

Once this inductive definition has been set up properly (together with enough
properties of the gluing of metric spaces, and of inductive limits of metric
spaces), the rest can be formalized without any specific difficulty.

\bibliography{biblio}
\bibliographystyle{amsalpha}

\end{document}